\newcommand\ineqno{\hfill\refstepcounter{equation}\text{(\theequation)}}
\newcommand{\bzero}{\mbox{\boldmath{$0$}}}
\newcommand{\bA}{{\bf A}}
\newcommand{\ba}{{\bf a}}
\newcommand{\bB}{{\bf B}}
\newcommand{\bb}{{\bf b}}
\newcommand{\bE}{{\bf E}}
\newcommand{\be}{{\bf e}}
\newcommand{\bF}{{\bf F}}
\newcommand{\ff}{{\bf f}}
\newcommand{\bG}{{\bf G}}
\newcommand{\bH}{{\bf H}}
\newcommand{\bI}{{\bf I}}
\newcommand{\bk}{{\bf k}}
\newcommand{\bL}{{\bf L}}
\newcommand{\bM}{{\bf M}}
\newcommand{\bp}{{\bf p}}
\newcommand{\bQ}{{\bf Q}}
\newcommand{\br}{{\bf r}}
\newcommand{\bV}{{\bf V}}
\newcommand{\bX}{{\bf X}}
\newcommand{\bx}{{\bf x}}
\newcommand{\by}{{\bf y}}
\newcommand{\bz}{{\bf z}}
\newcommand{\Diag}{\mbox{\boldmath\bf Diag}\, }
\newcommand{\Tr}{\mbox{\rm Tr}\, }
\newcommand{\bmu}{\mbox{\boldmath{$\mu$}}}
\newcommand{\bLambda}{\mbox{\boldmath{$\Lambda$}}}
\newcommand{\bgamma}{\mbox{\boldmath{$\gamma$}}}
\theoremstyle{definition}
\newtheorem{theorem}{Theorem}[section]
\newtheorem{lemma}[theorem]{Lemma}
\theoremstyle{remark}
\newtheorem{remark}{Remark}
\journal{Journal of \LaTeX\ Templates}
\begin{document}
	
	\begin{frontmatter}
		
		\title{Array Resource Allocation for Radar and Communication Integration Network}
		
		\author{Zhenkai Zhang\fnref{myfootnote}}
		\fntext[myfootnote]{Zhenkai Zhang is with the department of electronic and information, Jiangsu University of Science and Technology, and is currently a visiting professor at the Department of Electrical and Computer Engineering, University of Calgary, Calgary, AB T2N 1N4, Canada; Email: zhangzhenkai@just.edu.cn. Phone: +1(403) 903-7929.}
		
		\author{Hamid Esmaeili Najafabadi*\fnref{myfootnote2}}
		\fntext[myfootnote2]{Hamid Esmaeili Najafabadi is currently a PostDoc associate at the Department of Electrical and Computer Engineering, University of Calgary, Calgary, AB T2N 1N4, Canada; Email: hamid.esmaeili@gmail.com. Phone: +1(587) 322-3311.}
		\author{Henry Leung\fnref{myfootnote3}}
		\fntext[myfootnote3]{Prof. Henry Leung is with the Department of Electrical and Computer Engineering, University of Calgary, Calgary, AB T2N 1N4, Canada; Email: leungh@ucalgary.ca.\\
		This work is supported by the National Natural Science Fund of China grants 61871203 and 61701416, China Postdoctoral Science Foundation grant 2016M592334  and Qing Lan Project of Jiangsu Province.\\
		© 2020. This manuscript version is made available under the CC-BY-NC-ND 4.0 license http://creativecommons.org/licenses/by-nc-nd/4.0/
		}
		
		
		\cortext[mycorrespondingauthor]{Corresponding author}
		%
		
		\begin{abstract}
	A radar and communication integration (RCI) system has great flexibility in allocating antenna resources to guarantee both radar and communication performance. This paper considers the array allocation problems for multiple target localization and multiple platforms communication in an RCI network. The objective of array allocation is to maximize the communication capacity for each channel and to minimize the localization error for each target. In this paper, we firstly build a localization and communication model for array allocation in an RCI network. Minorization maximization (MM) is then applied to create surrogate functions for multiple objective optimization problems. The projected gradient descent (PGD) method is further employed to solve two array allocation problems with and without a certain communication capacity constraint. Computer simulations are conducted to evaluate the performance of the proposed algorithms. The results show that the proposed algorithms have improved localization and communication performance after efficiently allocating the array resource in the RCI network.
		\end{abstract}
		
		\begin{keyword}
			Array resource allocation \sep
			communication capacity \sep
			majorization-minimization \sep
			projected gradient decent \sep
			radar \& communication integration \sep
			target localization
		\end{keyword}
		
	\end{frontmatter}
	

	\section{Introduction}
	{Considerable} effort has been put over the past years in studying radio frequency integration technology to share the hardware and software resources \cite{Moghaddasi2016, Pastina2018}.
	Spectrum sharing technology between multiple-input multiple-out (MIMO) radar and communication is studied in \cite{Singh2018}, where an optimization algorithm is presented to design the integrated transceivers that can maximize the radar detection probability and guarantee the communication quality.
	A spectrum sharing algorithm is proposed in \cite{Zhang2019} that incorporates communication information into radar waveforms. When the radar and communication systems are operated over the same frequency band,  orthogonal frequency division multiplexing (OFDM) waveform design based on power minimization under mutual information constraints is considered in \cite{Shi2018}. 
	Two waveform designs are proposed for an OFDM integrated radar and communication system \cite{Liu2017}. A dual-function radar-communication system is proposed by using the sidelobe manipulation concept \cite{Nusenu2019}, where the communication signals are transmitted at the null radiation direction of the radar's main beam. The waveform is designed to minimize the multi-user interference by developing an appropriate beam pattern \cite{Liu2018}. Using a similar approach, communication symbols are embedded into the radar waveform by introducing a weighted coefficient to make a balance between the communication performance and radar sidelobe \cite{Gu2018}. An integrated vehicular radar-communication system at 60 GHz is developed in \cite{Kumari2018} based on the auto-correlation property at zero-Doppler. A joint radar-communication system is designed based on time modulated array in \cite{Shan2018} according to the civil and military requirements.
	These works on radar and communication integration (RCI) mainly focus on addressing spectrum sharing \cite{Singh2018}-\cite{Zhang2019}, waveform optimization \cite{Shi2018}-\cite{Gu2018} and system design \cite{Kumari2018}-\cite{Shan2018}. The problem of resource allocation for the integration network is rarely considered.
	
	Meanwhile, networked radar systems have been shown to offer better performance for target tracking or localization. 
	A joint antenna selection and power allocation for MIMO radar networks based on convex optimization method is presented in \cite{Ma2014}. In the same MIMO radar network, power combined with bandwidth and beam is optimized for the best radar performance \cite{Garcia2014} and \cite{Yan2016}. References  \cite{Deligiannis2017, Chen2015} and \cite{Alirezaei2015} present the power allocation schemes for target detection, target tracking, and target classification in a radar network, respectively. An adaptive radar receivers placement approach is proposed in \cite{BenKilani2018} to maximize the signal-to-interference-plus ratio for all channels. References \cite{Yan2017, Xie2018}  develop power allocation methods for target tracking in a radar network, which employ optimization methods to allocate power resources.  Using the maximum block improvement method in cellular networks and radar systems, an optimization framework is developed for resource allocation in \cite{Aubry2018}.
	Resource optimization is solved in \cite{Zhou2019} for the wireless-powered integrated radar and communication system subject to the performance constraints. A transmit antenna selection method for iterative receivers is presented in \cite{Kim2007}, and a joint transmitter and receiver antenna selection method is investigated \cite{Coskun2011, Ju2010, Yang2015}. The transmit power and the number of active antennas are jointly optimized to get the highest energy efficiency \cite{Jiang2012}. The performance of actual antenna systems using the antenna selection methods is evaluated and examined in \cite{Papamichael2011, Yilmaz2014, Fuchs2016}.
	
	Although the above-mentioned radar resource management  \cite{Ma2014}-\cite{Zhou2019} and antenna selection  \cite{Kim2007}-\cite{Fuchs2016} methods are informative, they are developed for a single radar or communication scenario. For complex RCI networks, the advantages of array resource allocation have not been realized.  
	
	This paper investigates the array allocation problem through mathematical derivations and computer simulations for multiple channel communication and multiple target localization. The main contributions of this paper are summarized as follows:
	\begin{enumerate}
		\item 
		An array allocation model is developed for multiple channel communication and multiple target localization in an RCI network, which is then converted to a linear model.
		\item  
		The MM optimization method is employed to tackle the multi-objective problem, i.e., maximizing the communication capacity and minimizing the tracking errors with the total array resource constraint.
		After designing the surrogate function based on the MM, it is converted to be a single objective optimization problem for multiple target localization. It is then solved by a joint method of MM and PGD.    
	\end{enumerate}

	\textit{Notations}: Bold uppercase (e.g., \textbf{H}) and lowercase  (e.g., \textbf{b}) letters represent the matrices
	and vectors, respectively. The notations $Tr(\cdot)$, $(\cdot)^T$ and $(\cdot)^H$ stand for trace, transpose and Hermitian of their argument, respectively. $\|\cdot\|_2$ denotes 
	the $l_2$ norm of a vector. $\bI_{N}$ denotes the identity matrix of the size $N\times N$, while $\mathbf{0}_{N}$  and $\mathbf{1}_{N}$ stands for  vectors of the size $ N $ with all its elements equal to zero and one, respectively. $\mathbf{1}_{M\times N}$ denotes a matrix of $N\times N$ with all its elements equal to one. $vec(\bA)$ denotes the column vector of matrix \bA. The sets of $M\times N$ integer matrices and the integer space are denoted by $\mathbb{N}^{M\times N}$ and $\mathbb{N}$. The symbols $\otimes$, $\circ$ and $ \oslash $ represent the Kronecker and Hadamard Products and Hadamard division, respectively. The use of $\circ$ together with a function indicates element-wise (a.k.a Hadamard) operation, e.g. $ \bx^{\circ 2} $ denotes element-wise square.
	Finally, the gradient of $f$ at $x$ is denoted by $\nabla f(x)$. 
	
	This paper is organized as follows. Section \ref{sec2} describes the system model for the RCI. Section \ref{sec3}  presents the array allocation problems using two different approaches after the brief introduction of MM theory. There, a hard-to-tackle multi-objective problem is converted to a simpler single-objective by applying MM and is then solved by the PGD.  The performance of the proposed algorithms is evaluated in Section \ref{sec4} through computer simulations. Section \ref{sec5} gives the conclusion of this paper.

	\section{System model}\label{sec2}
	
	\begin{figure}
		\centering
		\includegraphics[width=0.8\linewidth]{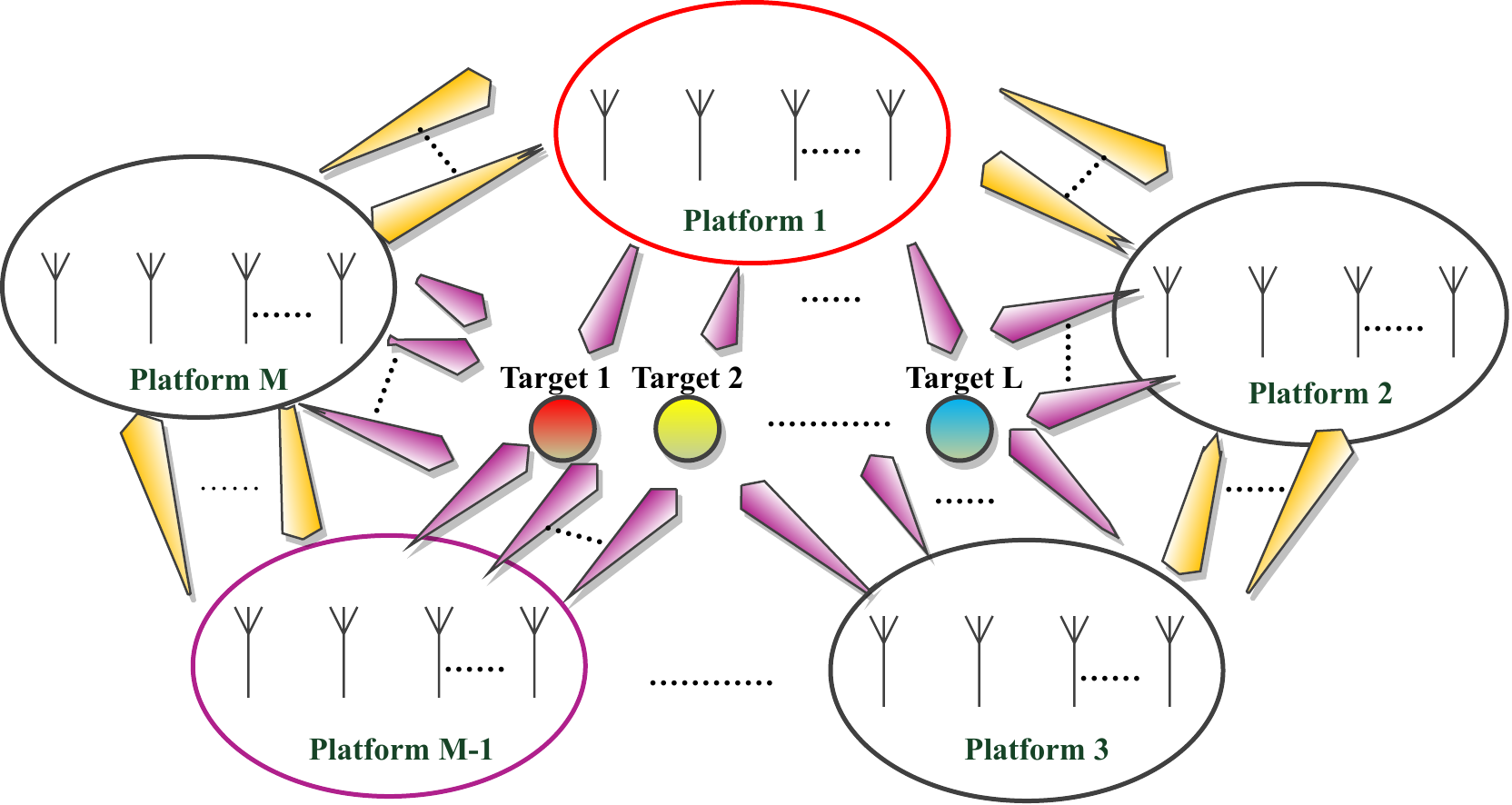}
		\caption{ Radar and communication integration network}
		\label{fig1}
	\end{figure}
	We consider the RCI network, with $M$ array platforms and $N$ targets for localization, as depicted in Fig. \ref{fig1}.
	The array resource on every platform is desired to be divided into $N+(M-1)$ parts, which are used for localization and communication, respectively. The tasks for every platform include the communication with $M-1$ platforms, and localization for $N$ targets. $\bA$ denotes the array resource allocation result for all the platforms, which will be solved by our proposed method.  
	The array resource allocation matrix $\bA$ can be written as
	\begin{equation}
	\bA=\\
	\left[
	\begin{matrix}
	a_{1,1}&\cdots& a_{1,M}& a_{1,M+1}& a_{1,M+2}& \cdots& a_{1,M+N}\\
	a_{2,1}&\cdots& a_{2,M}& a_{2,M+1}& a_{2,M+2}& \cdots& a_{2,M+N}\\
	\vdots&\ddots&\vdots&\vdots&\vdots&\ddots&\vdots\\
	a_{M,1}&\cdots& a_{M,M}& a_{M,M+1}& a_{M,M+2}& \cdots& a_{M,M+N}\\
	\end{matrix}
	\right],
	\end{equation}
	     where $a_{i,i}=0$,$i=1,2,...M$. $a_{i,j}$ denotes antenna number of the communication task between $i$-th platform and the $j$-th platform when $i\not=j, j=1,2,...,M$. It also denotes antenna number for the $(j-M)$-th localization task of the $i$-th platform when $i\not=j, j=M+1,M+2,...,M+N$.  
	
	\subsection{Communication model for array allocation}
	Communication capacity among different platforms is selected as one of the objective functions here.  
	The communication capacity between the $i$-th and $j$-th platform, denoted by $C(a_{i,j})$, can be expressed as follows \cite{Su2013,5469927}
	\begin{equation}
	C(a_{i,j})=\log_2\left[\det\left(\bI_{a_{i,j}}+\frac{P_{i,j}\bH\bH^{H}}{a_{i,j}\mathcal{N}_0}\right)\right],\\  {i,j=1,2,...,M,i\neq j,} \label{C1}
	\end{equation}
	where $\bI_{a_{i,j}}$ is the identity matrix of size $a_{i,j}$, $P_{i,j}$ denotes the transmit power using $a_{i,j}$ antennas, $\mathcal{N}_0$ is the channel noise covariance, and  $\bH$ is the channel coefficient matrix. The element $h_{m,n}$ of $\bH$ is given by \cite{Su2013}
	\begin{equation}
	h_{m,n}=\frac{\lambda}{r_{m,m}}\exp(-j2\pi\frac{r_{m,n}}{\lambda}), \label{h1}
	\end{equation}
	where $\lambda$ is the waveform length, $r_{m,m}$ is the distance from the $m$-th transmit antenna on the $i$-th platform to the $n$-th receiver antenna on the $j$-th platform.
	  Although the sparse array provides much more degrees of freedom for antenna configuration, it is also associated with unpredictable sidelobe behavior, which brings high complexity in beampattern design.    
	Here, the transmit and receiver antennas on a platform are equally spaced, while the distance between the $i$-th platform and the $j$-th platform is $d_{i,j}$ meters, where $d_{i,j}\gg d$. Therefore, all elements of $\bH$ are assumed to be the equal and equation \eqref{h1} can be written as
	\begin{equation}
	h_{m,n}=\frac{\lambda}{d_{i,j}}\exp(-j2\pi\frac{d_{i,j}}{\lambda}). 
	\end{equation}
	Then, equation \eqref{C1} can be formulated as:
	\begin{equation*}
	C(a_{i,j})=\log_2\left[\det\left(\bI_{a_{i,j}}+\left(\frac{\lambda}{d_{i,j}}\right)^2\textbf{1}_{a_{i,j}\times a_{i,j}}\frac{a_{i,j}\Delta P}{\mathcal{N}_0}\right)\right],\\
	{i,j=1,2...M, i\neq j.}\ineqno\label{comu1}
	\end{equation*}
	 where $\Delta P$ represents the transmit power of every antenna. 
	\subsection{Radars' localization model for array allocation}
	     In the platform network of this paper, the orthogonal waveform for target localization is assumed to be transmitted by platform $i$, reflected by the target, and received by platform $i$ itself. The noise and clutters from other targets and platforms are assumed to be well suppressed by other filtering techniques. In the localization process, the variation of the targets’ center of mass,  is also supposed to be small with respect to the system resolution capabilities here \cite{Godrich2011}. 
	
	The Cramer-Rao bound (CRB) matrix, $\bE_{z}$, provides a lower bound for the localization mean-square error (MSE) of $z$-th target. The lower bound on the sum of localization MSE, $\bL_{z}$, is denoted as the trace of CRB matrix. The $\bE_{z}$ and $\bL_{z}$ can be respectively written as \cite{Godrich2011,poor2013introduction}
	\begin{equation}
	\bE_{z}=\left[\sum_{i=1}^{M}\bp_{z}(a_{i,M+z})\bG\right]^{-1}, i=1,2...M,\;\; z=1,2...N
	\end{equation} 
	\begin{dmath}
		\bL_z(a_{1,M+z},...,a_{M,M+z})=\Tr(\bE_{z})
		=\frac{\bb^T\bp_{z}(a_{1,M+z},...,a_{M,M+z})}{\bp_{z}(a_{1,M+z},...,a_{M,M+z})^T\bQ\bp_{z}(a_{1,M+z},...,a_{M,M+z})},\label{rad1}
	\end{dmath}
	where $\bG$ and $\bQ$ are the system parameter matrices defined in \cite{Godrich2011,poor2013introduction}. Here, $\bp_{z}(a_{i,M+z})$ is an element of $\bp_{z}(a_{1,M+z},...,a_{M,M+z})$, which is the transmit power of the sensor network for localizing the $z$-th target, and can be written as
	\begin{equation}
	\bp_{z}(a_{1,M+z},...,a_{M,M+z})=\Delta P\left[a_{1,M+z},...,a_{M,M+z}\right]^T. \label{rad2}
	\end{equation}
	 
	\section{Optimal Array allocation for the RCI network }\label{sec3}
	     Based on the array resource allocation models for localization and communication, the proposed mathematical problems will be further derived based on Minorization Maximization (MM) method. The projected gradient descent (PGD) method is then used to solve the array allocation problems for the RCI network, in order to improve the localization and communication performance.  
	
	\subsection{MM Theroy}
	MM is the dual method for majorization minimization \cite{sun2017majorization}, a powerful minimization technique widely applied to engineering applications. Suppose we seek the solution of following optimization
	\begin{gather}\label{ha6}
	\max_{\bx}f(\bx)\\
	s.t. \;\bx \in \mathcal{X},
	\end{gather}
	where $ \bx $ is decision variable, $ \mathcal{X} \subset \mathbb{R}^{n} $ is feasible set and $ f(.) $ is the objective. The MM
	propose optimizing a minorizer problem iteratively:
	$ \bx^{m+1} = \arg\max_{\bx}g(\bx|\bx^{m}) $, where the minorizer function $ g(.|\bx^{m}) $ satisfies the following golden conditions:
	\begin{gather}\label{key}
	1) f(\bx)\ge g(\bx|\bx^{m}), \bx,\bx^{m}\in\mathcal{X},\\
	2) f(\bx^{m}) = g(\bx^{m}|\bx^{m}), \bx^{m}\in\mathcal{X}.
	\end{gather}
	It is easy to verify that the sequence $ \{\bx^{m}\} $ converges to the optimal solution of \eqref{ha6} by increasing the objective in each step \cite{sun2017majorization}.      
	In fact,  $ f(\bx^{m+1}) \ge g(\bx^{m+1}|\bx^{m})\ge g(\bx^{m}|\bx^{m}) = f(\bx^{m}) $ which ensures a non-decreasing path for the sequence.
	The following Lemma is applied for finding the majorizer in this paper. 
	 
	\begin{lemma}\label{lem}
		Suppose $ f:\mathbb{R}^{n}\rightarrow\mathbb{R} $ is a continuously differentiable function with a Lipschitz continuous gradient. Also, suppose that there exist a matrix $ \bM  $ such that  $  \bM\le \nabla^{2}f(\bx) $. That is, $  \nabla^{2}f(\bx)-\bM  $ positive semi-definite for every $ \bx \in \mathbb{R}^{n} $. Then, we have
		\begin{equation}\label{ha1}
		{f}(\bx) \ge {f}(\by)+\nabla{f}(\by)^{T}(\bx-\by)+\frac{1}{2}(\bx-\by)^{T}{\bM}(\bx-\by),
		\end{equation}
		for every $ \bx, \by \in \mathbb{R}^{n}$
	\end{lemma}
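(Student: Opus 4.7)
The plan is to prove the quadratic lower bound via Taylor's theorem with integral remainder, which is the natural way to convert a pointwise Hessian bound into a global quadratic minorizer. Since $ f $ is continuously differentiable with Lipschitz gradient, in fact $ f $ is twice differentiable almost everywhere and, more importantly, the second-order integral Taylor formula is justified on the line segment joining $ \bx $ and $ \by $.

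First I would fix $ \bx, \by \in \mathbb{R}^{n} $ and set $ \phi(t) = f(\by + t(\bx-\by)) $ for $ t \in [0,1] $. Applying the one-dimensional Taylor expansion with integral remainder to $ \phi $ gives
\begin{equation*}
f(\bx) = f(\by) + \nabla f(\by)^{T}(\bx-\by) + \int_{0}^{1}(1-t)\,(\bx-\by)^{T}\nabla^{2}f(\by+t(\bx-\by))(\bx-\by)\,dt.
\end{equation*}
This is the standard representation one obtains by writing $ \phi(1) = \phi(0) + \phi'(0) + \int_{0}^{1}(1-t)\phi''(t)\,dt $ and identifying $ \phi'(0) = \nabla f(\by)^{T}(\bx-\by) $ and $ \phi''(t) = (\bx-\by)^{T}\nabla^{2}f(\by+t(\bx-\by))(\bx-\by) $.

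Next I would invoke the hypothesis $ \nabla^{2}f(\bz) - \bM \succeq 0 $ for every $ \bz \in \mathbb{R}^{n} $, applied pointwise with $ \bz = \by + t(\bx-\by) $. This gives $ (\bx-\by)^{T}\nabla^{2}f(\by+t(\bx-\by))(\bx-\by) \ge (\bx-\by)^{T}\bM(\bx-\by) $ for every $ t \in [0,1] $. Since $ (1-t) \ge 0 $ on $ [0,1] $, I can lower-bound the integrand and pull the constant quadratic form outside the integral:
\begin{equation*}
\int_{0}^{1}(1-t)\,(\bx-\by)^{T}\nabla^{2}f(\by+t(\bx-\by))(\bx-\by)\,dt \;\ge\; \left(\int_{0}^{1}(1-t)\,dt\right)(\bx-\by)^{T}\bM(\bx-\by) = \tfrac{1}{2}(\bx-\by)^{T}\bM(\bx-\by).
\end{equation*}
Substituting back into the Taylor identity immediately yields \eqref{ha1}.

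There is no real obstacle; the only point that deserves a line of care is the justification of the integral Taylor formula under the stated regularity. The Lipschitz-gradient hypothesis ensures $ \nabla f $ is absolutely continuous along any line segment, so $ t \mapsto \nabla f(\by+t(\bx-\by))^{T}(\bx-\by) $ is absolutely continuous and differentiable almost everywhere with derivative $ (\bx-\by)^{T}\nabla^{2}f(\by+t(\bx-\by))(\bx-\by) $, which is enough to validate the integration-by-parts step that produces the $ (1-t) $ weight. Everything else is a one-line bound.
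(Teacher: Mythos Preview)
Your proof is correct. The integral Taylor remainder together with the pointwise Hessian lower bound gives the inequality cleanly, and your remark about absolute continuity of $\nabla f$ along segments is enough to justify the formula under the stated hypotheses (though note that the assumption $\bM \preceq \nabla^{2}f(\bx)$ for every $\bx$ already presupposes the Hessian exists everywhere, so less care is needed than you suggest).

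The paper's own proof takes a different, shorter route: it observes that the statement is the dual of a known majorization lemma (Lemma~12 in \cite{sun2017majorization}, which gives a quadratic \emph{upper} bound when $\nabla^{2}f \preceq \bM$), and obtains the present minorization by applying that lemma to $\bar f := -f$ and $\bar\bM := -\bM$, then multiplying through by $-1$. Your argument is more self-contained and shows explicitly where the factor $\tfrac12$ comes from; the paper's argument is a two-line reduction but defers the analytic content to an external reference. Both are standard and equally valid.
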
   
	\begin{proof}
		This Lemma is the dual form of Lemma 12 in \cite{sun2017majorization}. To observe the proof, define $ \bar{f}(.):= -f(.) $ and $ \bar{\bM} = -\bM $, then $ \bar{f}(.)$ and $ \bar{\bM} $ satisfy the assumptions in Lemma 12 of \cite{sun2017majorization}. Therefore, we have
		\begin{equation}\label{key}
		\bar{f}(\bx) \le \bar{f}(\by)+\nabla\bar{f}(\by)^{T}(\bx-\by)+\frac{1}{2}(\bx-\by)^{T}\bar{\bM}(\bx-\by),
		\end{equation}
		for every $ \bx, \by \in \mathbb{R}^{n}$. By multiplying both side to $ -1 $, \eqref{ha1} is derived.
	\end{proof}
	 
	\subsection{Maximizing  channel capacities and minimizing CRLB}
	     Two basic problems are given as follows  with the constraint of total number of antennas to design the optimal array allocation strategy:
	\begin{gather}\label{ha12}
	\max_{\bA} \;\;C(a_{i,j})\;\; i,j = 1,...M, i\ne j, \nonumber \\
	\max_{\bA} \;\;\frac{1}{\bL_z(a_{1,M+z},...,a_{M,M+z})} \;\; z=1,2,...,N,   \label{c1}\\
	s.t. \sum_{j=1}^{M+N} a_{i,j}\leq \gamma_i,\;\;
	a_{i,j}\in \mathbb{N}, \;\; \bA\in \mathbb{N}^{M\times(M+N)}, \nonumber
	\end{gather} 
	where $C(a_{i,j})$ is the communication capacity from the $i$-th platform to the $j$-th platform, $\bL_z(a_{1,M+z},...,a_{M,M+z})$ is defined in \eqref{rad1}  for the $z$-th target, $\gamma_i$ is the total transmit antennas number of the $i$-th platform. The objective functions in our algorithm, will be designed based the scalarization of performance metrics for communication and localization in the network, which have been defined here.
	 
	In \eqref{c1}, the constraint function can be simplified as:
	\begin{equation}
	\sum_{i=1}^{M}\sum_{j=1}^{M+N}a_{i,j}=\textbf{1}_{M}^T\bA\textbf{1}_{M+N}, 
	\end{equation}
	First note that $ \log_2(.) $ is an increasing function. Therefore, this function can be omitted in maximizing each $ C(a_{i,j}) $. 
	Consequently, by combining \eqref{comu1}, \eqref{rad1} and \eqref{rad2}, the  multi-objective problem can be rewritten as  
	\begin{gather}
	\max_{\bA}\;\;  \det\left(\bI_{a_{i,j}}+\left(\frac{\lambda}{d_{i,j}}\right)^2\textbf{1}_{a_{i,j}\times a_{i,j}}\frac{a_{i,j}\Delta P}{\mathcal{N}_0}\right),\label{o2_1}  \\ i\ne j = 1,...M, \nonumber \\ 
	\max_{\bA} \;\;\frac{\Delta P\left[a_{1,M+z},...,a_{M,M+z}\right]\bQ\left[a_{1,M+z},...,a_{M,M+z}\right]^T}{\bb^T\left[a_{1,M+z},...,a_{M,M+z}\right]^T}\nonumber\\ z=1,2,...,N,      \label{o2_2}  \\
	s.t. \;\;\textbf{1}_{M}^T\bA\textbf{1}_{M+N}\leq \gamma_i \;\;
	a_{i,j}\in \mathbb{N}. \label{oc1}
	\end{gather}
	To further simplify the problem, we employ the following approximation from  \cite{petersen2012matrix} for small values of $\varepsilon$:
	\begin{equation}
	\det(\bI+\varepsilon\bX)\cong 1+\det(\bX)+\varepsilon\Tr(\bX)+\frac{1}{2}\varepsilon^2\Tr(\bX)^2-\frac{1}{2}\varepsilon^2\Tr(\bX^2).
	\end{equation}
	In this regard, note that 
	$ \frac{\lambda}{d_{i,j}}\ll1 $.
	Therefore, we have
	\begin{multline}
	\det\left(\bI_{a_{i,j}}+\left(\frac{\lambda}{d_{i,j}}\right)^2\textbf{1}_{a_{i,j}\times a_{i,j}}\frac{a_{i,j}\Delta P}{\mathcal{N}_0}\right) \\
	\cong 1+\det\left(\textbf{1}_{a_{i,j}\times a_{i,j}}\frac{a_{i,j}\Delta P}{\mathcal{N}_0}\right)
	+\left(\frac{\lambda}{d_{i,j}}\right)^2\Tr\left(\textbf{1}_{a_{i,j}\times a_{i,j}}\frac{a_{i,j}\Delta P}{\mathcal{N}_0}\right)\\
	+\frac{1}{2}\left(\frac{\lambda}{d_{i,j}}\right)^4\Tr\left(\textbf{1}_{a_{i,j}\times a_{i,j}}\frac{a_{i,j}\Delta P}{\mathcal{N}_0}\right)^2
	-\frac{1}{2}\left(\frac{\lambda}{d_{i,j}}\right)^4\Tr\left(\textbf{1}_{a_{i,j}\times a_{i,j}}\textbf{1}_{a_{i,j}\times a_{i,j}}\left(\frac{a_{i,j}\Delta P}{\mathcal{N}_0}\right)^2\right)\\
	\simeq 1+\left(\frac{\lambda}{d_{i,j}}\right)^2\frac{(a_{i,j})^2\Delta P}{\mathcal{N}_0} \label{o12}.
	\end{multline}
	%
	Hence, the first set of objectives, i.e., \eqref{o2_2} can be rephrased as the following multi-objective problem:
	\begin{equation}\label{key}
	\max_{\bA}\left[\left(\frac{\lambda}{d_{i,j}}\right)^2\frac{(a_{i,j})^2\Delta P}{\mathcal{N}_0};i,j = 1,...M, i\ne j\right].
	\end{equation}
	This problem can be converted to a single-objective problem by using scalarization. Define  $\bx=vec(\bA)$; then, this problem is equivalent to
	\begin{gather}
	\max_{\bA} \bmu^{T}\left(\frac{\Delta P}{\mathcal{N}_0}vec(\bLambda)\circ\bx^{\circ2} \right) \nonumber
	\\=\bmu^{T}\left[\frac{\Delta P}{\mathcal{N}_0}\Diag(\Diag(vec(\bLambda))\bx)\bx\right],\label{j1}
	\end{gather}
	where $ \bmu = [\mu_{1}, \mu_{2}, ..., \mu_{M\times(M+N)}] $ is the scalarization coefficient vector, determined by the relative importance of objectives. $ \ba \circ \bb $ and $ \bx^{\circ 2} $ represent Hadamard product and power, respectively. Also, $\Diag(.)$ is the main diagonal linear operator, and
	\begin{gather}
	\bLambda=
	\left[
	\begin{matrix}
	\left(\frac{\lambda}{d_{1,1}}\right)^2&    \left(\frac{\lambda}{d_{1,2}}\right)^2&\cdots&     \left(\frac{\lambda}{d_{1,M}}\right)^2& \boldsymbol{0}_{N}^T\\
	\left(\frac{\lambda}{d_{2,1}}\right)^2&    \left(\frac{\lambda}{d_{2,2}}\right)^2&\cdots&     \left(\frac{\lambda}{d_{2,M}}\right)^2& \boldsymbol{0}_{N}^T\\\\
	\vdots&\vdots&\ddots&\vdots& \boldsymbol{0}_{N}^T\\
	\left(\frac{\lambda}{d_{M,1}}\right)^2&    \left(\frac{\lambda}{d_{M,2}}\right)^2&\cdots&     \left(\frac{\lambda}{d_{M,M}}\right)^2& \boldsymbol{0}_{N}^T\\
	\end{matrix}
	\right]. \label{canshu1}
	\end{gather}

	Furthermore, the second set of objectives, i.e., \eqref{o2_2} can be also rephrased as:
	\begin{equation}
	\max_{\bA} \left[\frac{\be_z^T\bA^T\bQ\bA\be_z}{\bb^T\bA\be_z}; z=1,2,...,N \right],   \label{o11}
	\end{equation}
	where 
	\begin{equation}\label{key}
	\be_z=[0,0,...,1_{(M+z)},...,0]^T.
	\end{equation}
	In the  denominator of \eqref{o11}, we have
	\begin{equation}
	\bb^T\bA\be_z=\Tr(\bb^T\bA\be_z)
	=\Tr(\be_z\bb^T\bA)
	=vec(\bb\be_z^T)^Tvec(\bA)
	=\bk^T\bx, \label{j2}
	\end{equation}
	where $\bk=vec(\bb\be_z^T)$ and $\bx=vec(\bA)$.
	The numerator of \eqref{o11} can also be simplified:
	\begin{multline}
	\be_z^T\bA^T\bQ\bA\be_z
	=\Tr(\be_z^T\bA^T\bQ\bA\be_z)
	=\Tr(\bA^T\bQ\bA\be_z\be_z^T)
	=vec(\bA)^Tvec(\bQ\bA\be_z\be_z^T)\\
	=vec(\bA)^T(\be_z\be_z^T\otimes\bQ)vec(\bA)
	=\bx^T(\be_z\be_z^T\otimes\bQ)\bx
	=\bx^T\bB_z\bx,  \label{j3}
	\end{multline}
	where
	\begin{equation}\label{key}
	\bB_z=\be_z\be_z^T\otimes\bQ.
	\end{equation}
	Since the decision variable, $ \bx = vec(\bA) $, appears both in numerator and denominator, we apply the $ \ln(.) $ function on the objectives for $ z = 1,...,N $. Note that the resulting problem is equivalent to \eqref{o11} because $ \ln(.) $ is an increasing function. After scalarization the problem is given by
	\begin{equation}\label{key}
	\max_{\bx} \sum_{z = 1}^{N}\zeta_{z}\ln\left(\frac{\bx^{T}\bB_z\bx}{\bk^{T}\bx}\right)
	\end{equation}
	where $ \zeta_{z}, z = 1,...,N$, are scalarization coefficients defining relative importance of the objectives.   The order to describe the optimization problems more clearly, the tasks in the resource scheduling problems are often assumed to be the same important in for designing the objective function \cite{Feng2016}\cite{8368273} or constraint condition \cite{Xie2018}. So all channel capacities and radar antenna links are assumed to be equally important in our algorithm, and scalarization coefficients are all set to one. Without loss of generality, we continue our development by assuming $ \bmu = \textbf{1}_{M \times (M+N)} $ and $ \zeta_{z} = 1, z = 1,...,N $; other scenarios can be developed by following the same steps.  
	
	Moreover, equation \eqref{oc1} can be expressed in matrix vector format:
	\begin{equation}
	\bA\textbf{1}_{M+N}\leq \bgamma_0, \label{oc3}
	\end{equation}
	where $\bgamma_0=[\gamma_1,\gamma_2,...,\gamma_M]^T$ and $ \le $ denotes element-wise relation.
	Based on the $ vec(.) $ operator in \cite{petersen2012matrix}, \eqref{oc3} can also be expressed as:
	\begin{equation}
	\bV_0\bx\leq\bgamma_0,
	\end{equation}
	where 
	\begin{gather}
	\bV_0=
	\left[
	\begin{matrix}
	1&\textbf{0}_{M-1}^T&\cdots& 1&\textbf{0}_{M-1}^T\\
	0& 1&\textbf{0}_{M-2}^T&\cdots& \boldsymbol{0}_{M-2}^T\\
	\ddots&\ddots&\ddots&\ddots&\ddots\\
	\textbf{0}_{M-1}^T&    1&\cdots&     \textbf{0}_{M-1}^T&    1\\
	\end{matrix}
	\right]_{M\times [M(M+N)]},
	\end{gather}\\
	and in the first row of $\bV_0$, there are $M+N$ groups of $[1,\textbf{0}_{M-1}^T]$.
	
	     To simplify notations, we define the following functions as the objective functions:  
	\begin{gather}
	f_1=\frac{\Delta P}{\mathcal{N}_0}\textbf{1}_{M\times(M+N)}^T\left[\Diag(\Diag(vec(\bLambda))\bx)\bx\right],\\
	f_2=\sum_{z=1}^{N}\left(\ln(\bx^T\bB_z\bx)-\ln(\bk^T\bx)\right)
	=\sum_{z=1}^{N} f_{2z},\label{ha5}
	\end{gather}
	where $f_{2z}:=\left(\ln(\bx^T\bB_z\bx)-\ln(\bk^T\bx)\right)$.
	According to the MM theory \cite{qiu2016prime,sun2017majorization,hunter2004tutorial}, we construct the following quadratic functions $g_1(.|\bx^{m})$ and $g_2(.|\bx^{m})$ as the minorizer functions:
	\begin{gather}
	g_1(\bx|\bx^{m})=f_1(\bx^m)+\nabla f_1(\bx^m)^T(\bx-\bx^m) \nonumber \\
	+\frac{1}{2}(\bx-\bx^m)^T\bM_1(\bx-\bx^m)  \label{mm1}
	\end{gather}
	and
	\begin{gather}
	g_2(\bx|\bx^{m})=\sum_{z=1}^{N} [f_{2z}(\bx^m)+\nabla f_{2z}(\bx^m)^T(\bx-\bx^m)  \nonumber \\
	+\frac{1}{2}(\bx-\bx^m)^T\bM_{2z}(\bx-\bx^m)], \label{mm2}
	\end{gather}
	where
	\begin{gather}
	\nabla f_1(\bx)=2\frac{\Delta P}{N_0}\Diag(vec(\bLambda))\bx^m,\\
	\bM_1=\frac{\Delta P}{N_0}\Diag(vec(\bLambda)),\\
	\nabla f_{2z}=\frac{(\bB_z+\bB_z^T)\bx}{\bx^T\bB_z\bx}-\frac{\bk}{\bk^T\bx},
	\end{gather}
	and 
	\begin{equation*}\label{key}
	\bM_{2z}=\frac{\bB_z+\bB_z^T}{(\bx^m)^T\bB_z\bx^m}-\frac{(\bB_z+\bB_z^T)\bx^m(\bx^m)^T(\bB_z+\bB_z^T)}{((\bx^m)^T\bB_z\bx^m)^2}-2\frac{\bk\bk^T}{(\bk^T\bx^m)^2}.    
	\end{equation*}
	Consequently, we have $  \bM_1\le \nabla^2 f_1(\bx^m)$ and $ \bM_{2z}\le \nabla^2 f_{2z}(\bx^m) $. Using Lemma \ref{lem}, we find that  $ f_{1}(.) $ and $ f_{2}(.) $ are minorized by $g_1(.|\bx^{m})$ and $g_2(.|\bx^{m})$, respectively.
	Omitting the constants  and combining the two objective functions using  scalarization, the final objective function can be written as:
	\begin{gather}
	\max_{\bx} \Psi_1(\bx) \nonumber\\
	s.t.\;\; \bx\in \mathbb{N},\;\; \bV_0\bx\leq\bgamma_0, \label{o1c1}
	\end{gather}
	where
	\begin{multline}
	\Psi_1(\bx)=\nabla f_1(\bx^m)(\bx-\bx^m)+\frac{1}{2}(\bx-\bx^m)^T\bM_1(\bx-\bx^m)\\
	+w_0\left[\sum_{z=1}^{N}\left(\nabla f_{2z}(\bx^m)^T(\bx-\bx^m)+\frac{1}{2}(\bx-\bx^m)^T\bM_{2z}(\bx-\bx^m)\right)\right]   \label{oo1}
	\end{multline}
	and $w_0$ is the unification weight or scalarization coefficient. The
	PGD method \cite{7373645,7932172} is employed here to conclude the solution of this array allocation problem.
	The gradient of $\Psi_1(\bx)$ can be written as:
	\begin{multline}
	\nabla \Psi_1(\bx)=\nabla f_1(\bx^m)+\frac{1}{2}(\bM_1+\bM_1^T)(\bx-\bx^m)\\
	+w_0\left[\sum_{z=1}^{N}\left(\nabla f_{2z}(\bx^m)^T+\frac{1}{2}(\bM_{2z}+\bM_{2z}^T)(\bx-\bx^m)\right)\right].
	\end{multline}
	Using the PGD, $\bx_k$ can be computed iteratively by:
	\begin{equation}
	\bx_k=pro_{\bx}(\bx_{k-1}-t_k\nabla\Psi_1(\bx_{k-1})),\;\; k=1,2,3...   \label{p1}
	\end{equation}
	where $pro_{\bx}(\bx)$ is the Euclidean projection and is derived by:
	\begin{gather}
	pro_{\bx}(\bx)=\arg\min_{\bz}\|\bz-\bx\|_2^2   \label{p2}\\
	s.t. \bx\in \mathbb{N},\;\; \bV_0\bx\leq\bgamma_0. \nonumber
	\end{gather}
	In summary, the first array resource allocation procedures in the RCI network is described in Algorithm \ref{alg1}.

	\begin{algorithm}[!h]
		\KwIn{Information of localization and communication targets, Number of iterations ($K_1$,$K_2$)    }
		\KwResult{Allocated antennas}
		
		1. Set $ m = 0 $; start from one random sequence  $ \bx^{(0)} $ which satisfies the constraint in (\ref{o1c1}). \\

		2. \Repeat{$ m \ge K_1 $}{
			
			3. Construct surrogate functions, $ g_1$ and $ g_2 $,  using formulas \eqref{mm1} and \eqref{mm2}
			
			4.  Design the objective function in \eqref{o1c1}
			
			5. Set $ k = 0 $ and $ \bx_{(0)} $ as the initial vectors
			
			6. \Repeat{$ k\ge K_2 $}{
				
				7.    Compute the intermediate variable $ \br_k = \bx_{k-1}-t_k\nabla\Psi_1(\bx_{k-1})$

				8. Obtain $ \bx_{k} $ after projection based \eqref{p2}
				
				9. Round each element of $ \bx_{k} $ to the nearest integer number
				
				10. $ k = k+1 $
				
			}
			
			11. $ m = m+1 $
		}
		\caption{Array Allocation  for Radar and Communication Integration (A2RCI) }\label{alg1}
	\end{algorithm}
	
	\begin{remark}
		The optimization in \eqref{c1}  is a NP-hard problem. To observe this, note that they are instances of integer programming because of their criterion; it is NP-complete and  NP-hard. Interestingly, MM has been applied to solve NP-hard optimization problems previously \cite{Esmaeili-Najafabadi2019, Cui2017, wu2017cognitive} with very satisfying results, and it is naturally expected to have promising results here as well. To further analyze the algorithm, note that it consists of two loops: an outer loop derived from MM and an inner loop derived from PGD. The MM guarantees a non-decreasing path and hence the convergence. However, this convergence might be to a local maximum. The PGD part can avoid this because it is using a limited number of steps, i.e., for $ k\le K_2 $, if the $ t_k $ is properly tuned.
	\end{remark}
	
	\subsection{Allocating Communication Channels First}
	In this section, another algorithm is proposed  to optimize the radar localization performance that guarantees the desired communication performance. Henceforth, we express the design problem in terms of $ \bA $ or $ \bx = vec(\bA) $ whenever necessary  to avoid repeating the derivations. Consider the following design problem
	\begin{gather}
	\max_{\bx}\;       f_2(\bx)        \label{o2}\\       
	s.t.\; \log_2\left[\det\left(\bI_{a_{i,j}}+\left(\frac{\lambda}{d_{i,j}}\right)^2[\textbf{1}]_{a_{i,j}\times a_{i,j}}\frac{a_{i,j}\Delta P}{\mathcal{N}_0}\right)\right]\le \eta,  \label{o21} \\
	i,j=1,2,...,M,i\neq j,       \nonumber        \\
	\bx\in \mathbb{N},\;\; \bV_0\bx\leq\bgamma_{0}, 
	\end{gather}
	where $\eta$ is the desired threshold for every communication channel and $ f_2(.) $ is defined in \eqref{ha5}. As mentioned above, $ f_{2}(.) $ is minorized by 
	\begin{equation}
	\Psi_2(\bx)= g_2(\bx|\bx^{m})  \nonumber\\
	=\sum_{z=1}^{N}\left( f_{2z}(\bx^m)+\nabla f_{2z}(\bx^m)^T(\bx-\bx^m)+\frac{1}{2}(\bx-\bx^m)^T\bM_{2z}(\bx-\bx^m) \right).\label{ha4}
	\end{equation}
	We divide the minorizer problem into two subproblems and cyclically update the result. 
	
	First, the capacity constraint is separated from the problem. Following the same steps introduced in \eqref{o12} and \eqref{j1}, equation \eqref{o21} can be converted to
	\begin{equation}
	\frac{\Delta P}{\mathcal{N}_0}\bLambda\circ\bA^{\circ2}\le \bF,    \label{o2c2}
	\end{equation}
	where
	\begin{gather}
	\bF=[2^{\eta-1}\textbf{1}_{M\times M},\boldsymbol{0}_{M,N}].
	\end{gather}
	Also, \eqref{o2c2} can be rewritten as:
	\begin{equation}
	vec(\bF-\frac{\Delta P}{\mathcal{N}_0}\bLambda\circ\bA^{\circ2})
	=
	\textbf{f}-\frac{\Delta P}{\mathcal{N}_0}vec(\bLambda)\circ\bx^{\circ2}\le \bzero_{M^{2}+MN} ,   \label{canshu2}
	\end{equation}
	where 
	\begin{equation}\label{h2}
	\textbf{f}=vec(\bF).
	\end{equation}
	We further suppose that each channel first gets its maximum allowable resource. In this case, the inequality in \eqref{canshu2} converts to equality and we have
	\begin{equation}\label{ha3}
	\bx^{*} = \left(\frac{\mathcal{N}_0}{\Delta P}\ff \oslash vec(\bLambda)\right)^{\circ\frac{1}{2}},
	\end{equation}
	where $ \oslash $ denotes element-wise division. Note that this amount represents array resource dedicated to communication. The remaining resource can define the available resources for radar. 
	
	After removing the capacity constraint, the minorizer problem is of the form
	\begin{gather}
	\max_{\bx}\;      \Psi_2(\bx)    \label{o2cc},  \\
	s.t. \;\;\bx\in \mathbb{N},\;\; \bV_1\bx\leq\bgamma_1, \nonumber 
	\end{gather}
	where $ \bgamma_1 $ is determined by the remaining resource for the radar. That is, it is obtained by $ \bgamma_{1}=\bgamma_{0} - \bV_{0}\bx^{*} $. Also, $ \bV_{1} $ is given by
	\begin{gather}
	\bV_1=
	\left[
	\begin{matrix}
	\textbf{0}_{M^2}^T&1&\textbf{0}_{M-1}^T&\cdots& 1&\textbf{0}_{M-1}^T\\
	\textbf{0}_{M^2}^T&0&1&\textbf{0}_{M-2}^T&\cdots& \boldsymbol{0}_{M-2}^T\\
	\vdots&\ddots&\ddots&\ddots&\ddots&\ddots\\
	\textbf{0}_{M^2}^T&\textbf{0}_{M-1}^T&1&\cdots&     \textbf{0}_{M-1}^T&    1\\
	\end{matrix}
	\right].
	\end{gather}
	where its size is ${M\times[M(M+N)]}$, and in the first row of $\bV_1$, there are $N$ groups of $[1,\textbf{0}_{M-1}^T]$. Finally, the problem in \eqref{o2cc}  can be solved by using PGD as in \eqref{p1} and \eqref{p2}.
	Consequently, the second array resource allocation strategy for the RCI network is presented in Algorithm \ref{alg2}.
	
	\begin{algorithm}[!h]
		\KwIn{Information of localization and communication targets, Number of iterations ($K_1$,$K_2$), desird threshold $\eta$ for communication    }
		\KwResult{Allocated antennas}
		1. Construct $\bLambda$ and $\textbf{f}$ using \eqref{canshu1} and \eqref{h2}
		
		2. Allocate array resource for communication using \eqref{ha3}
		
		3. Update the left array resource $\bgamma_1$
		
		4. Set $ m = 0 $; start from one random sequence  $ \bx^{(0)} $ which satisfies the constraint $ \bV_1\bx\leq\bgamma_1 $ \\

		5. \Repeat{$ m \ge K_1 $}{
			
			6. Construct surrogate functions of $ g_2 $ using the formula of \eqref{mm2}
			
			7.  Design the objective function in \eqref{o2cc}
			
			8. Set $ k = 0 $ and $ \bx_{(0)} $ as the initial vectors
			
			9. \Repeat{$ k\ge K_2 $}{
				
				10.    Compute the intermediate variable $ \br_k = \bx_{k-1}-t_k\nabla\Psi_2(\bx_{k-1})$

				11. Obtain $ \bx_{k} $ after projection
				
				12. Round each element of $ \bx_{k} $ to the nearest integer number
				
				13. $ k = k+1 $
				
			}
			
			14. $ m = m+1 $
		}
		\caption{Array Allocation  for RCI , Algorithm 2 (A2RCI-II)}\label{alg2}
	\end{algorithm}
	
	\section{Simulations}\label{sec4}
	In this section, we present computer simulations to validate the efficiency of the proposed array allocation strategies. 
	In an RCI network, each platform is considered having the same uniform linear array comprising 600 transmit antennas. Without loss of generality, the transmit power of every antenna is set to $\Delta P= 1 $ kW. We only consider the allocation approaches for transmitting and suppose the receiver array resources are sufficient.
	All the simulations are analyzed and realized using the Matlab R2014a version, performing on a standard PC (with CPU Core i5, 2.4GHz, and 4GB of RAM). 
	
	To evaluate the proposed approach, we compare the localization and communication performance obtained by A2RCI (Algorithm \ref{alg1}), the nondominated sorting genetic algorithm II (NSGA-II) \cite{NSGA2019} and the multiple optimizations based on particle swarm optimization (MOPSO) \cite{MA2015}, which are the classical multi-objective optimization methods. 
	Three targets and three platforms are generated randomly over ten times in this simulation(Case-I). Figs. \ref{fig2} and \ref{fig3} show the comparison of communication capacity and localization 
	Root Cramer-Rao bound (RCRB).
	It is observed that Algorithm 1 has much better communication and localization performance due to the efficient array allocation.

	\begin{figure}\centering
		\includegraphics[width=.8\linewidth]{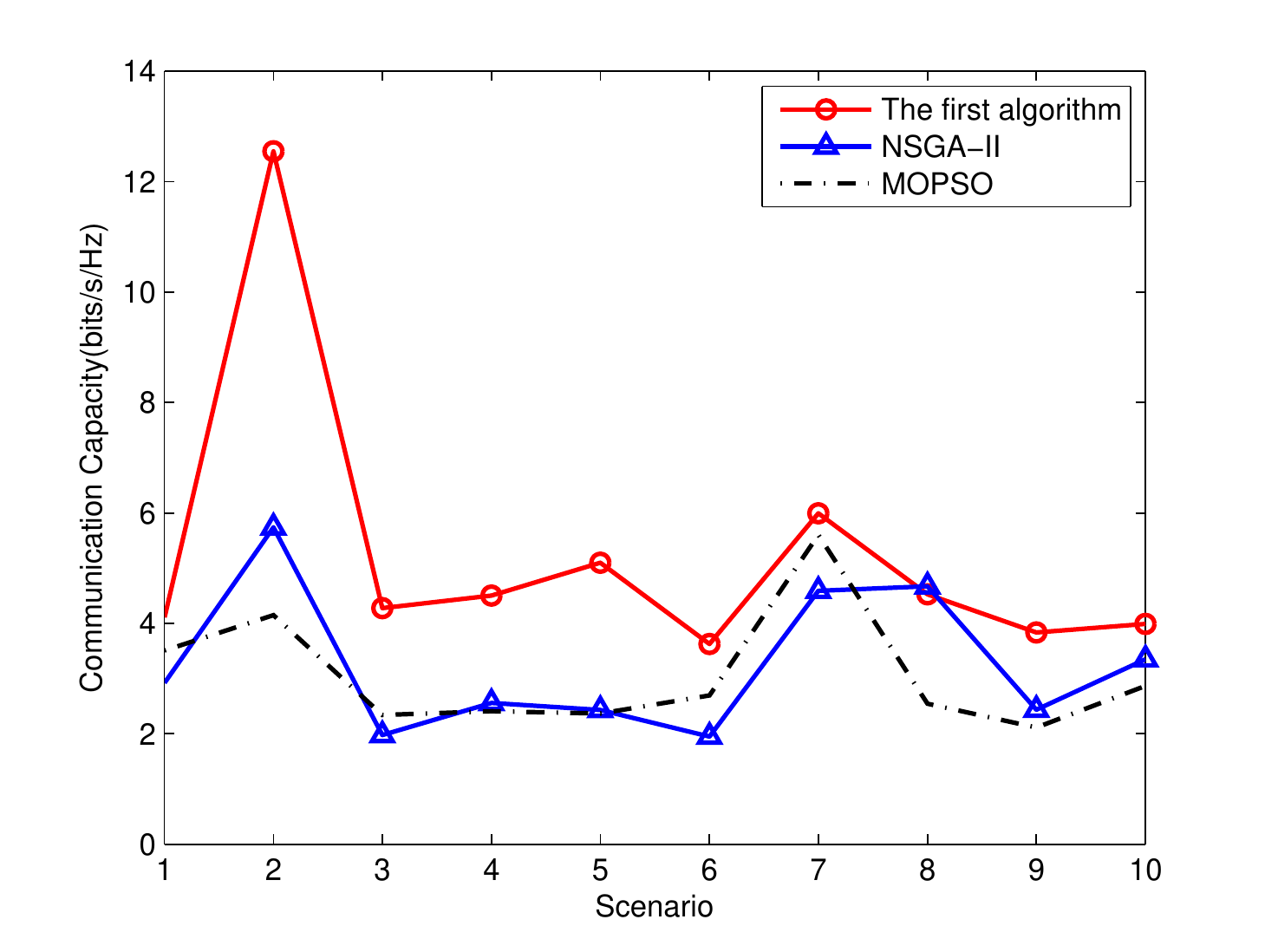}
		\caption{  Communication capacity comparison (3 targets and 3 platforms)}
		\label{fig2}
	\end{figure}
	
	\begin{figure}\centering
		\includegraphics[width=.8\linewidth]{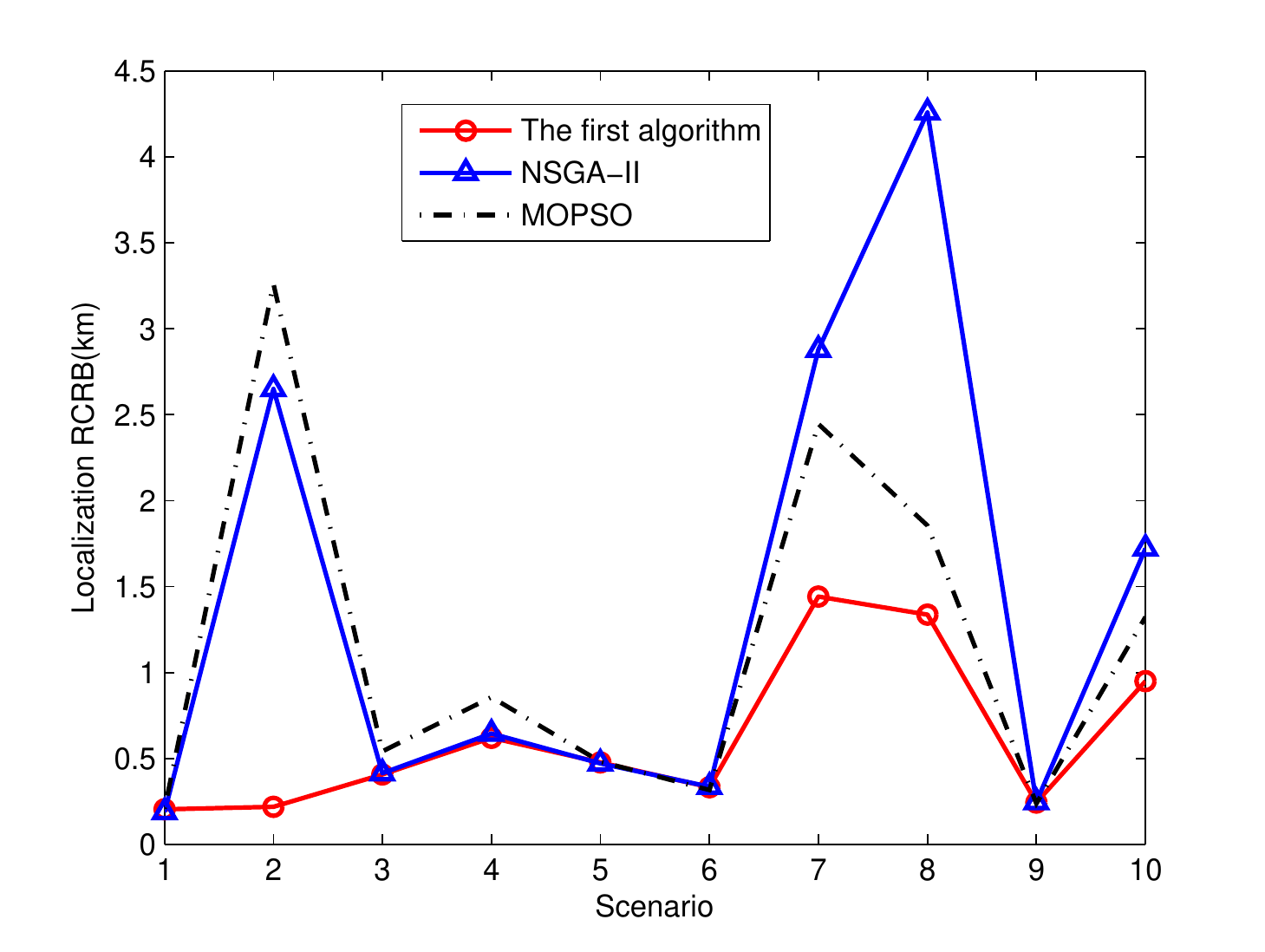}
		\captionof{figure}{  Localization performance comparison (3 targets and 3 platforms)}
		\label{fig3}
	\end{figure}
	
	Fig. \ref{fig4} depicts the radar and communication integration network with 12 platforms and 12 targets(Case-II). In this scenario, the communication capacity of every platform and the localization performance of each target are shown in Figs. \ref{fig5} and \ref{fig6}, respectively. The average communication capacity(ACC) and average localization CRLB (ALC) are also listed in Table 1. It can be observed that both of  NSGA-II and MOPSO based array allocation methods get worse performance than the proposed one. These results show that due to the increase of numbers of targets and platforms, there are many variables for the NSGA-II and MOPSO methods, which result in poorer performance. 
	
	\begin{center}
		\includegraphics[width=.8\linewidth]{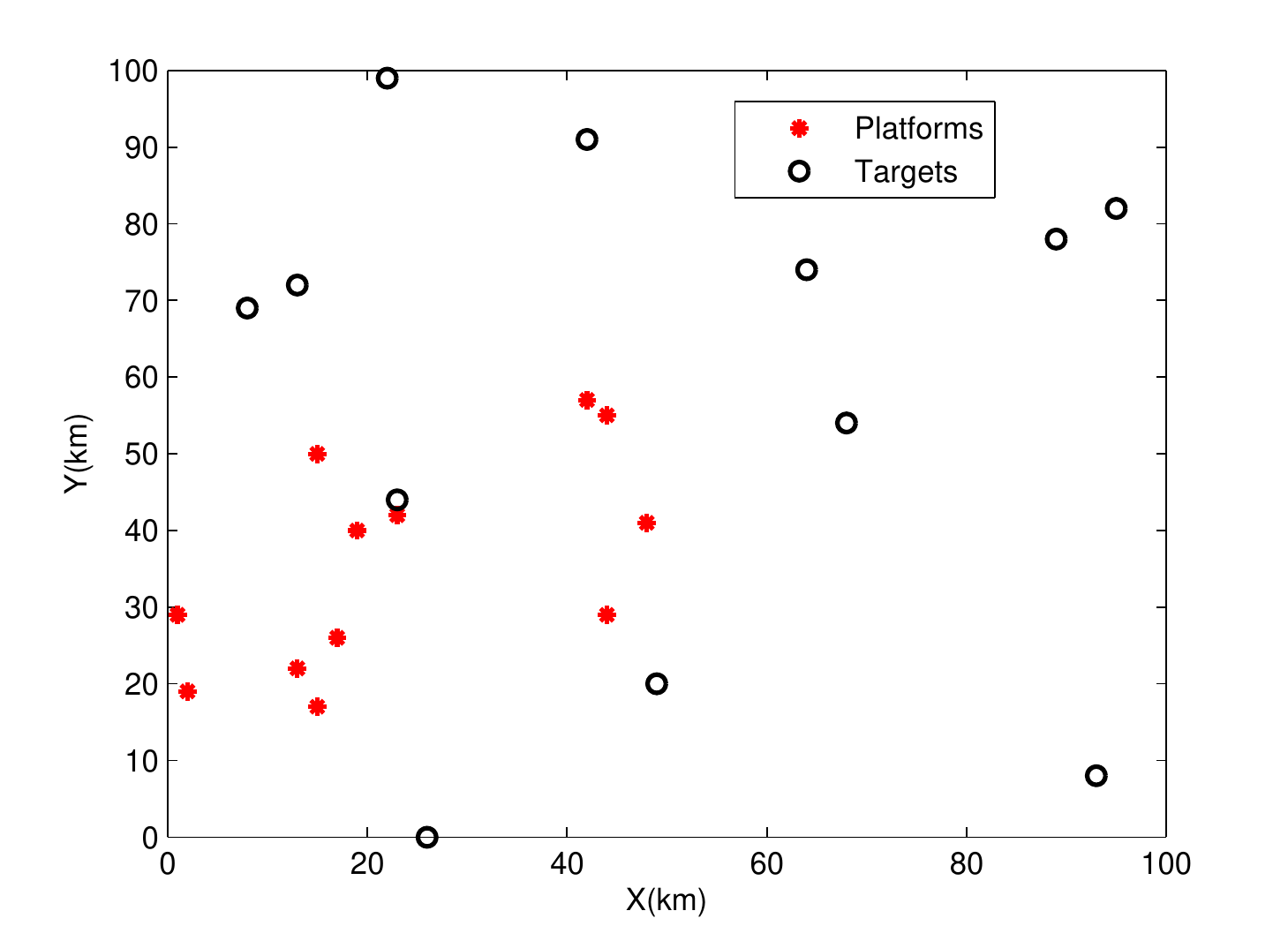}
		\captionof{figure}{  Radar and Communication integration network scenario (12 targets and 12 platforms)}
		\label{fig4}
	\end{center}
	
	\begin{center}
		\includegraphics[width=.8\linewidth]{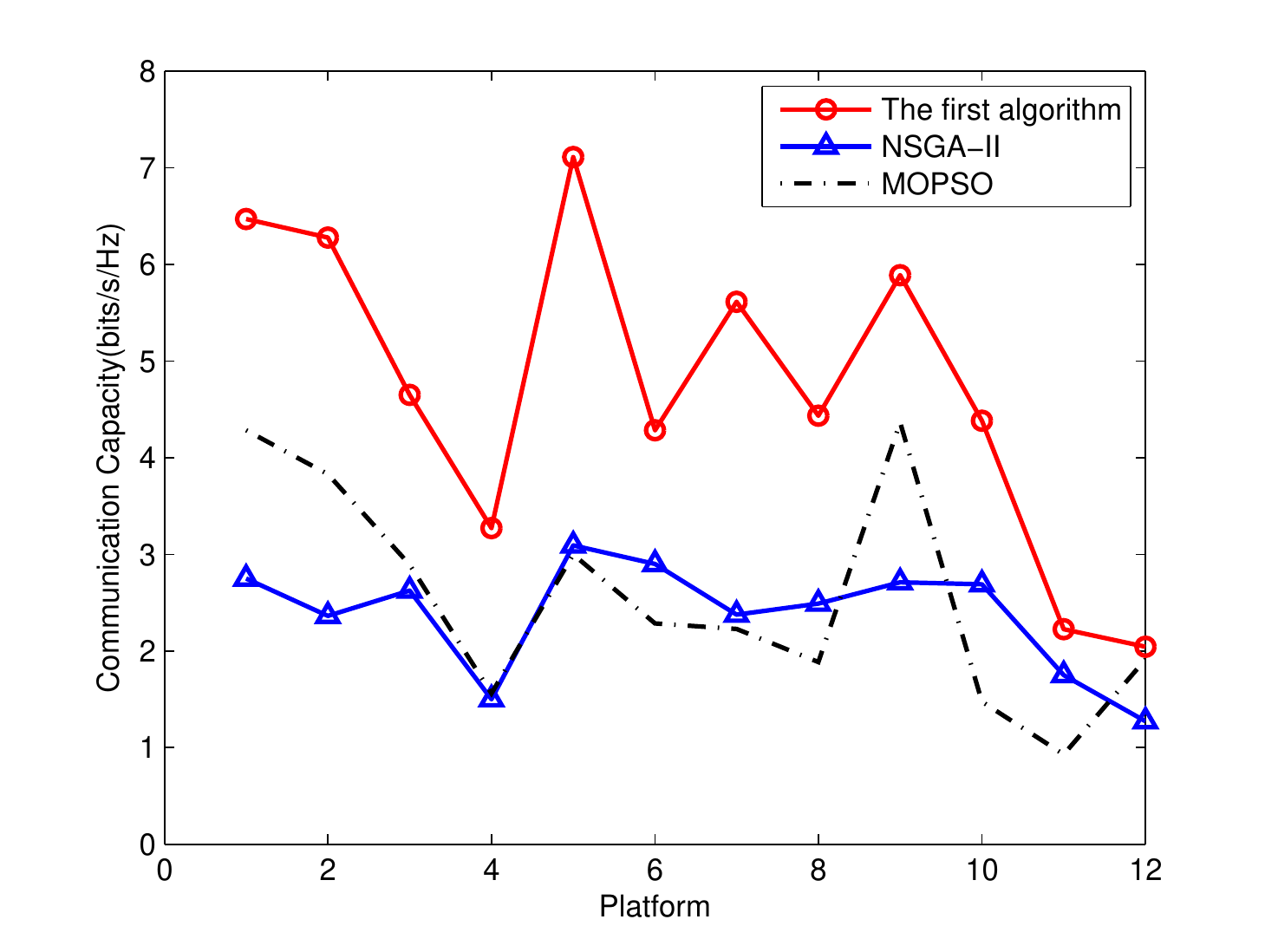}
		\captionof{figure}{  Communication Capacity comparison (12 targets and 12 platforms)}
		\label{fig5}
	\end{center}
	
	\begin{center}
		\includegraphics[width=.8\linewidth]{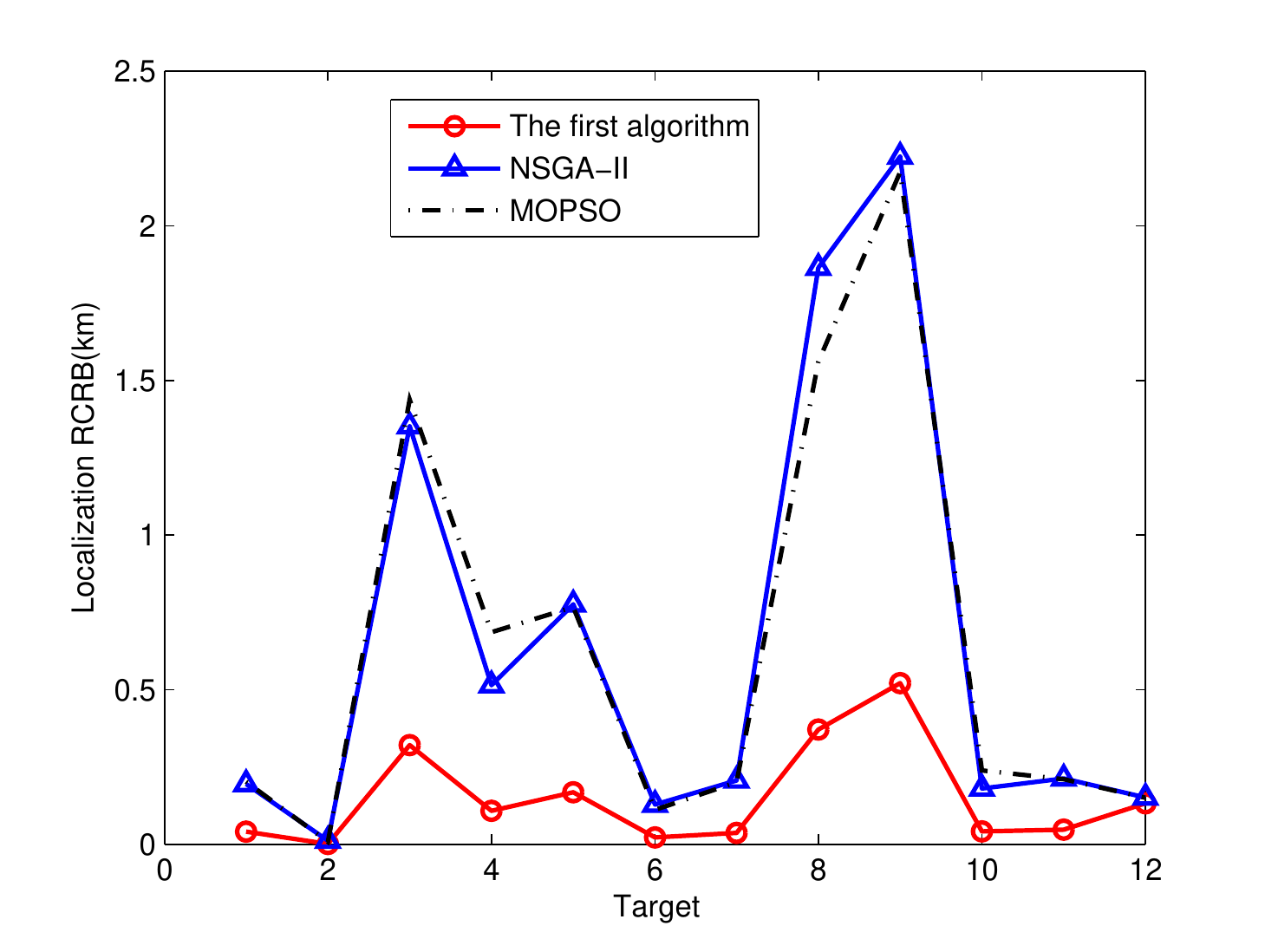}
		\captionof{figure}{  Localization performance comparison (12 targets and 12 platforms)}
		\label{fig6}
	\end{center}
	\begin{table*}[]    
		\centering    
		\caption{Performance comparisons based on Algorithm \ref{alg1}}    
		\label{Tab1}    
		\begin{tabular}{ccccccc}        
			\toprule        
			\multirow{2}{*}{Methods} & \multicolumn{2}{c}{$Case-I$} & \multicolumn{2}{c}{$Case-II$}\\        
			\cmidrule(r){2-3} \cmidrule(r){4-5}    
			&  $ACC(bits/s/Hz)$      &  $ALC(km)$   &   $ACC(bits/s/Hz)$     &   $ALC(km)$   \\        
			\midrule        
			$The$  $ first $ $ algorithm$             &5.251                         & 0.519                    & 4.721                   & 0.741\\        
			$NSGA-II$                         &3.263                       & 0.624                    & 2.277                 & 0.963  \\        
			$MOPSO$                           &3.061                          & 0.616                   & 2.554                 & 1.003\\        
			\bottomrule        
		\end{tabular}    
	\end{table*}
	
	Four different scenarios are considered in the second experiment, including three targets and three platforms, six targets and six platforms, nine targets and nine platforms, 12 targets, and 12 platforms. With 100 Monte Carlo simulations, Fig. \ref{fig7} shows the average communication capacity of each platform, while Fig. \ref{fig8} shows the average localization RCRB of each target. In Algorithm \ref{alg1}, the MM is employed to create two surrogate functions for the RCI network, and then PGD is used to solve for this problem.  As a result, Algorithm \ref{alg1} can obtain an optimal array of resource allocation in different radar and communication integration networks.
	
	\begin{center}
		\includegraphics[width=.8\linewidth]{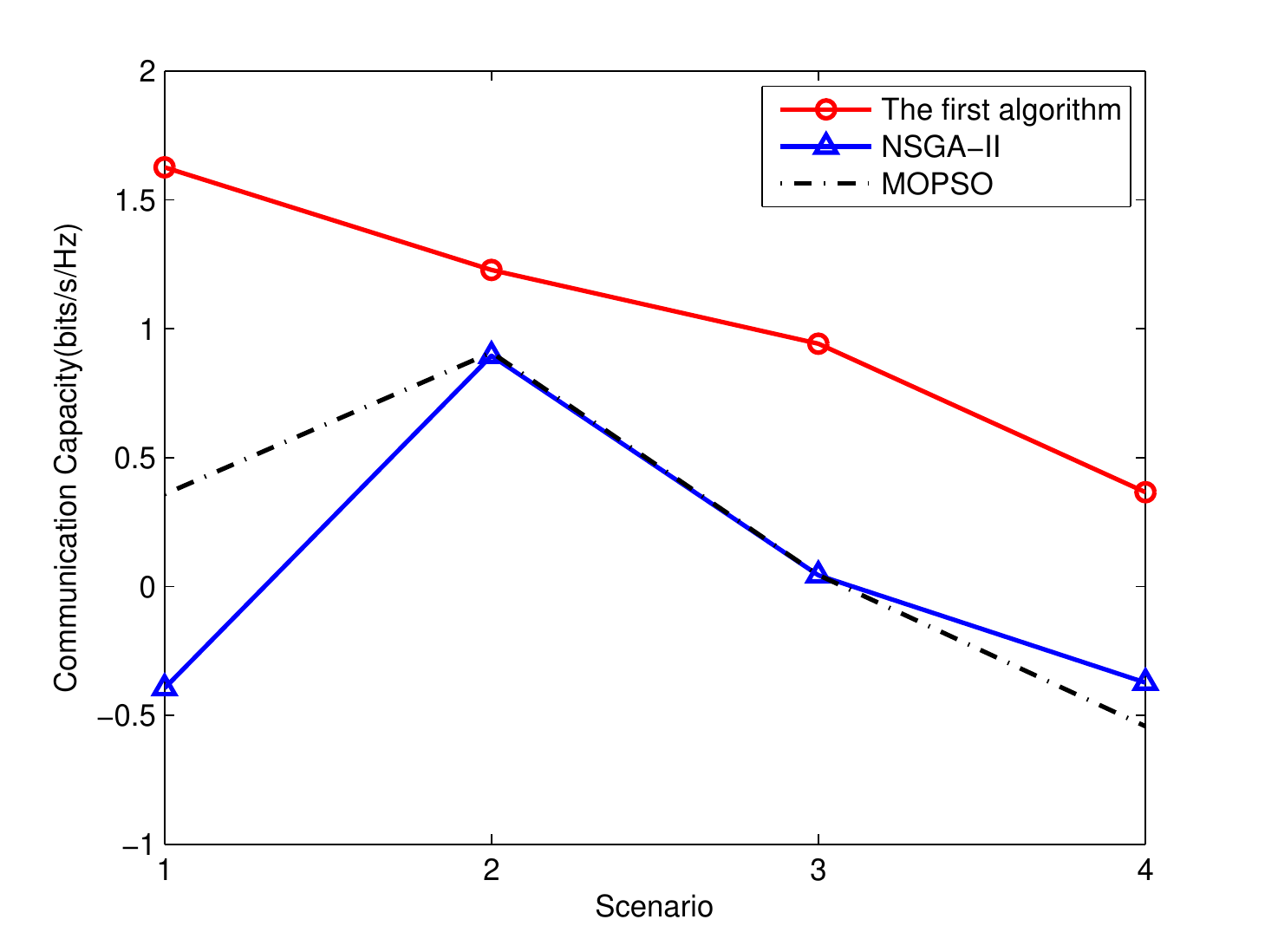}
		\captionof{figure}{  Average Communication Capacity comparison in different scenarios}
		\label{fig7}
	\end{center}
	
	\begin{center}
		\includegraphics[width=.8\linewidth]{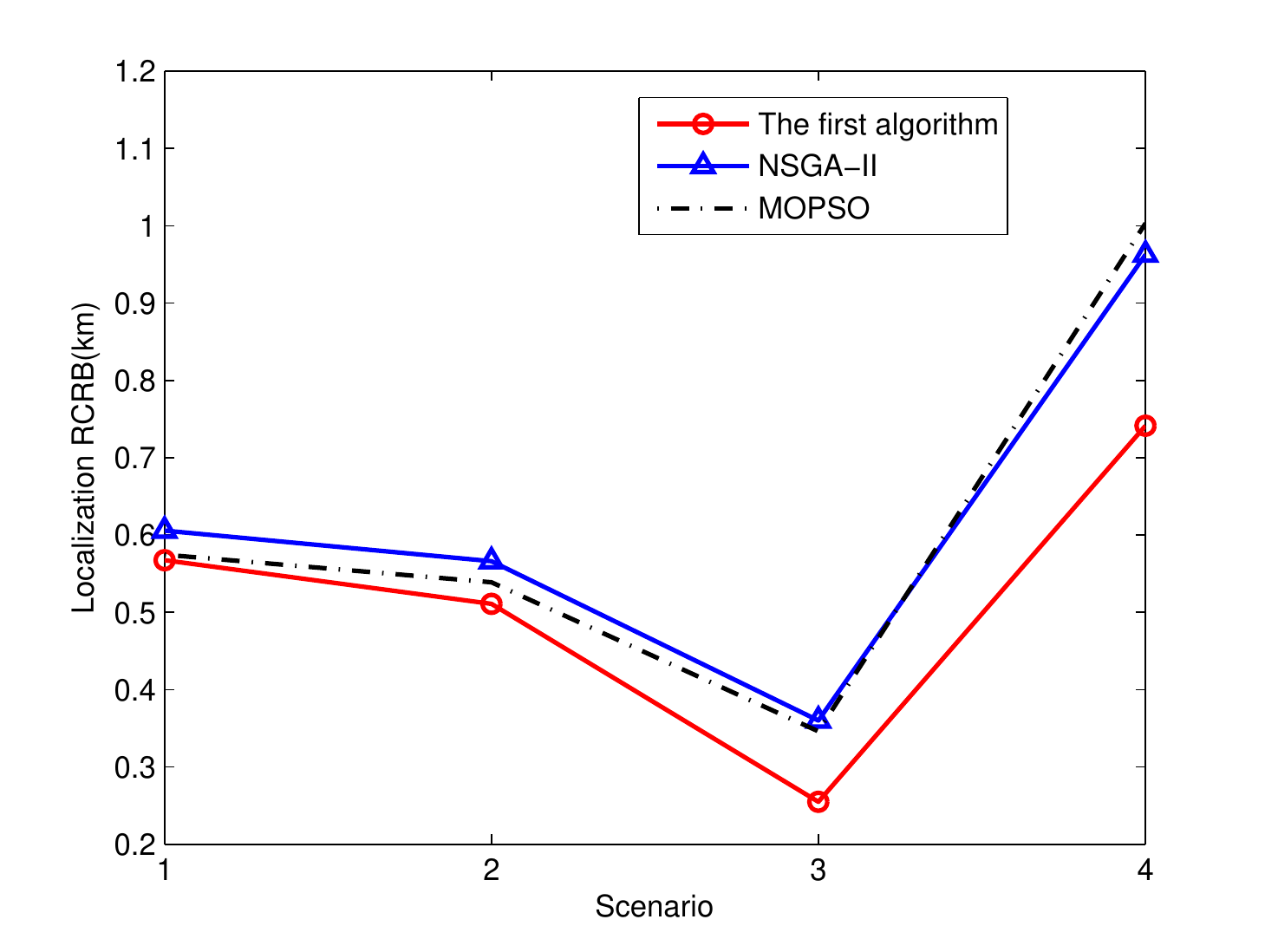}
		\captionof{figure}{  Average Localization performance comparison in different scenarios}
		\label{fig8}
	\end{center}

	In the third simulation, Algorithm \ref{alg2} is compared with the traditional Genetic algorithm (GA)  optimization \cite{GA2019}, and Particle swarm optimization(PSO)\cite{PSO2020} methods, which seem to be good at solving the single-objective optimization problems.
	Ten scenarios with three targets and three platforms are considered here.
	The average localization RCRB of each target for all scenarios is shown in Fig. \ref{fig9}, when the communication capacity threshold is set at 5 bits/s/Hz.  As expected, Algorithm \ref{alg2} can obtain a lower RCRB than the GA and PSO based methods.
	In one of the scenarios, localization performance is evaluated with different communication capacity thresholds from 1 bit/s/Hz to 5 bits/s/Hz, as illustrated in Fig. \ref{fig10}. The communication tasks need much more array resources due to the increase of capacity threshold. The localization performance becomes poorer with the decrease of the array resource used for radar localization.

	\begin{center}
		\includegraphics[width=.8\linewidth]{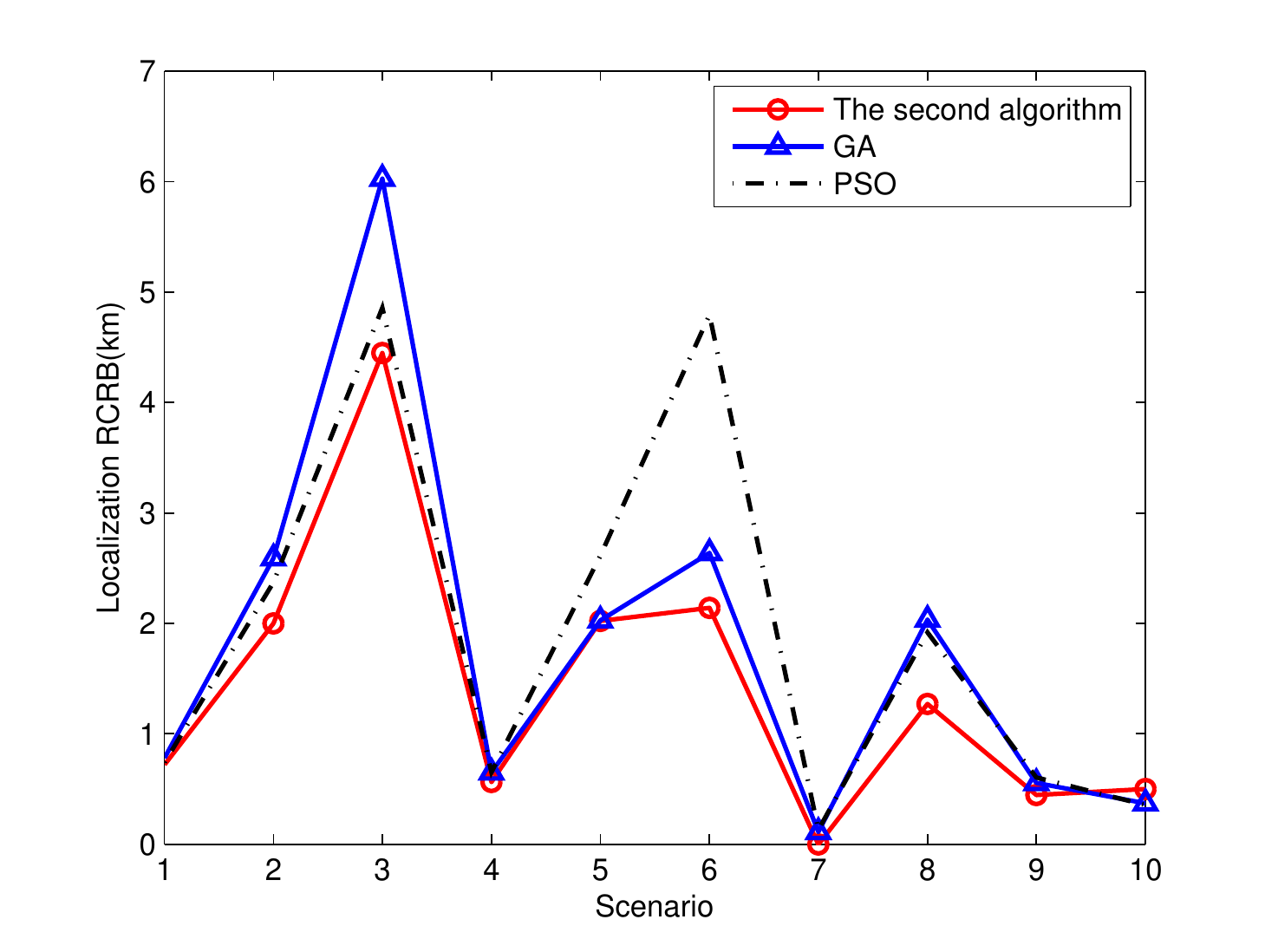}
		\captionof{figure}{  Average Localization performance comparison(3 targets and 3 platforms)}
		\label{fig9}
	\end{center}

	\begin{center}
		\includegraphics[width=.8\linewidth]{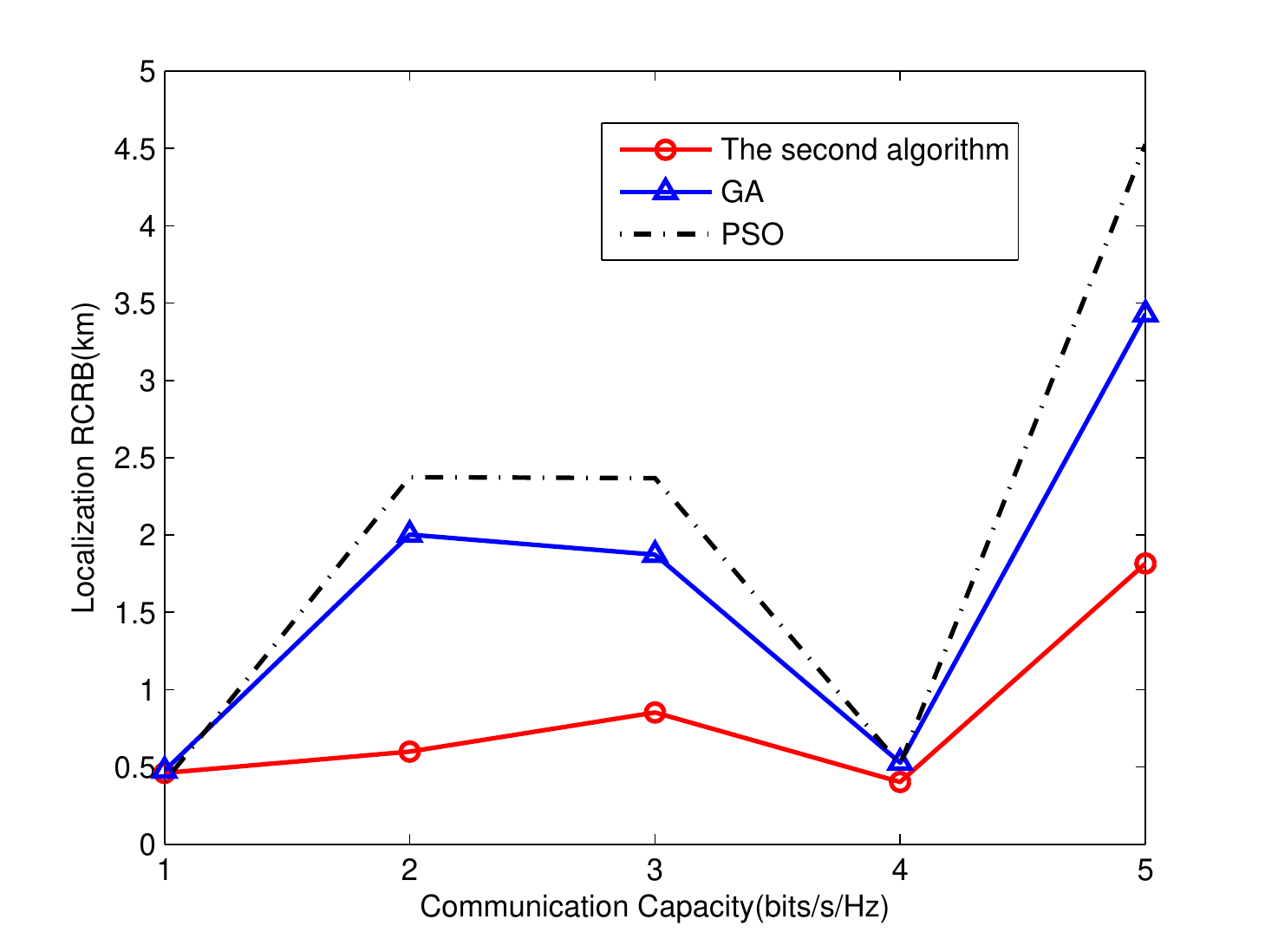}
		\captionof{figure}{  Average Localization performance comparison (different communication capacity threshold)}
		\label{fig10}
	\end{center}
	
	Consequently, the same four simulation scenarios for Algorithm \ref{alg1} are considered. Fig. \ref{fig11} depicts the average localization performance, under the same communication capacity constraint. It is clear that the proposed method has better localization performance, especially for the third and fourth scenarios, in which there are more targets and platforms.
	Fig. \ref{fig12} displays the localization performance in the environment of 12 targets and 12 platforms. Once again, we see that Algorithm \ref{alg2} outperforms the GA and PSO methods in allocating arrays for the RCI network.
	\begin{center}
		\includegraphics[width=.8\linewidth]{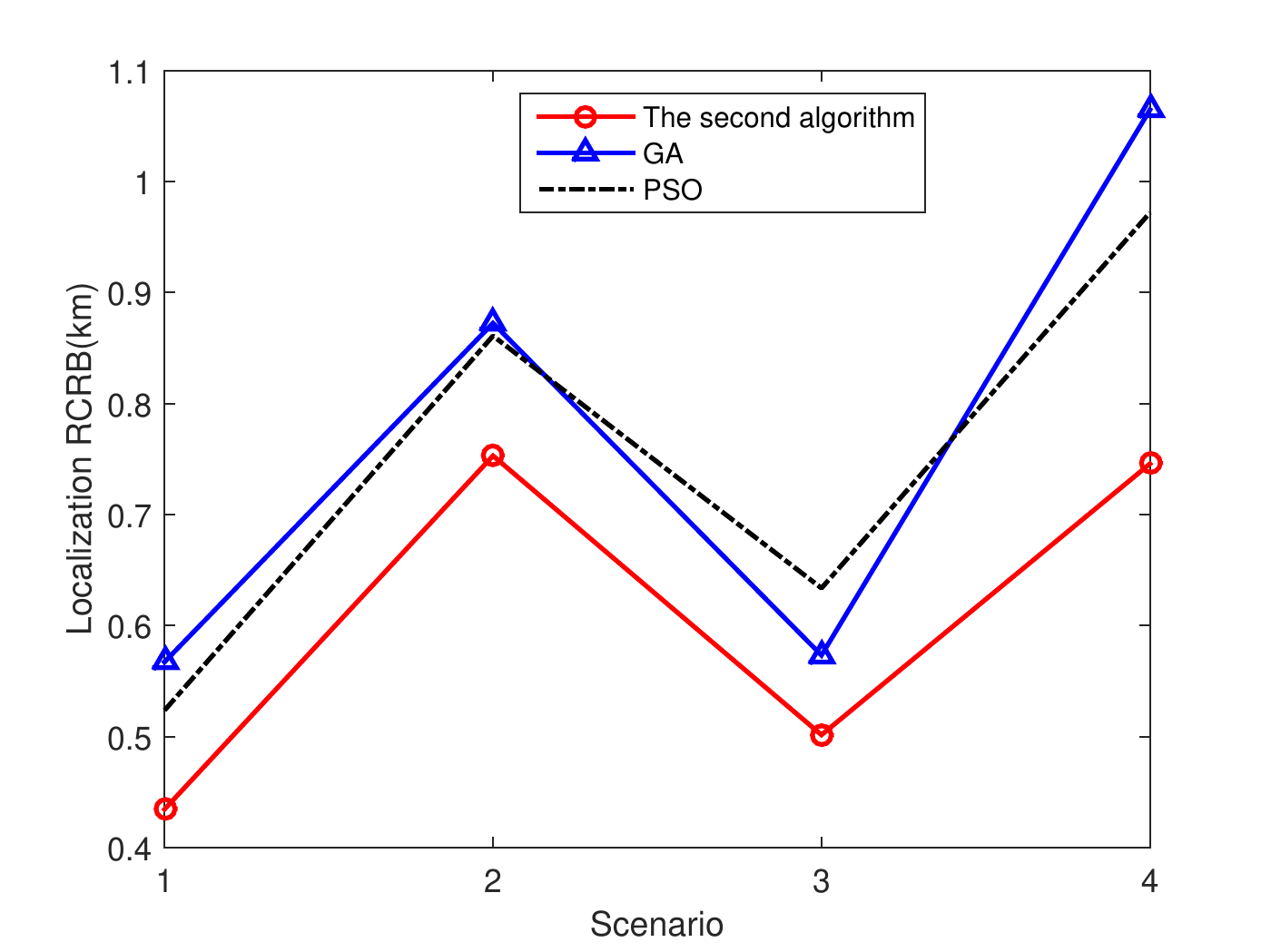}
		\captionof{figure}{  Average Localization performance comparison in different scenarios}
		\label{fig11}
	\end{center}
	
	\begin{center}
		\includegraphics[width=.8\linewidth]{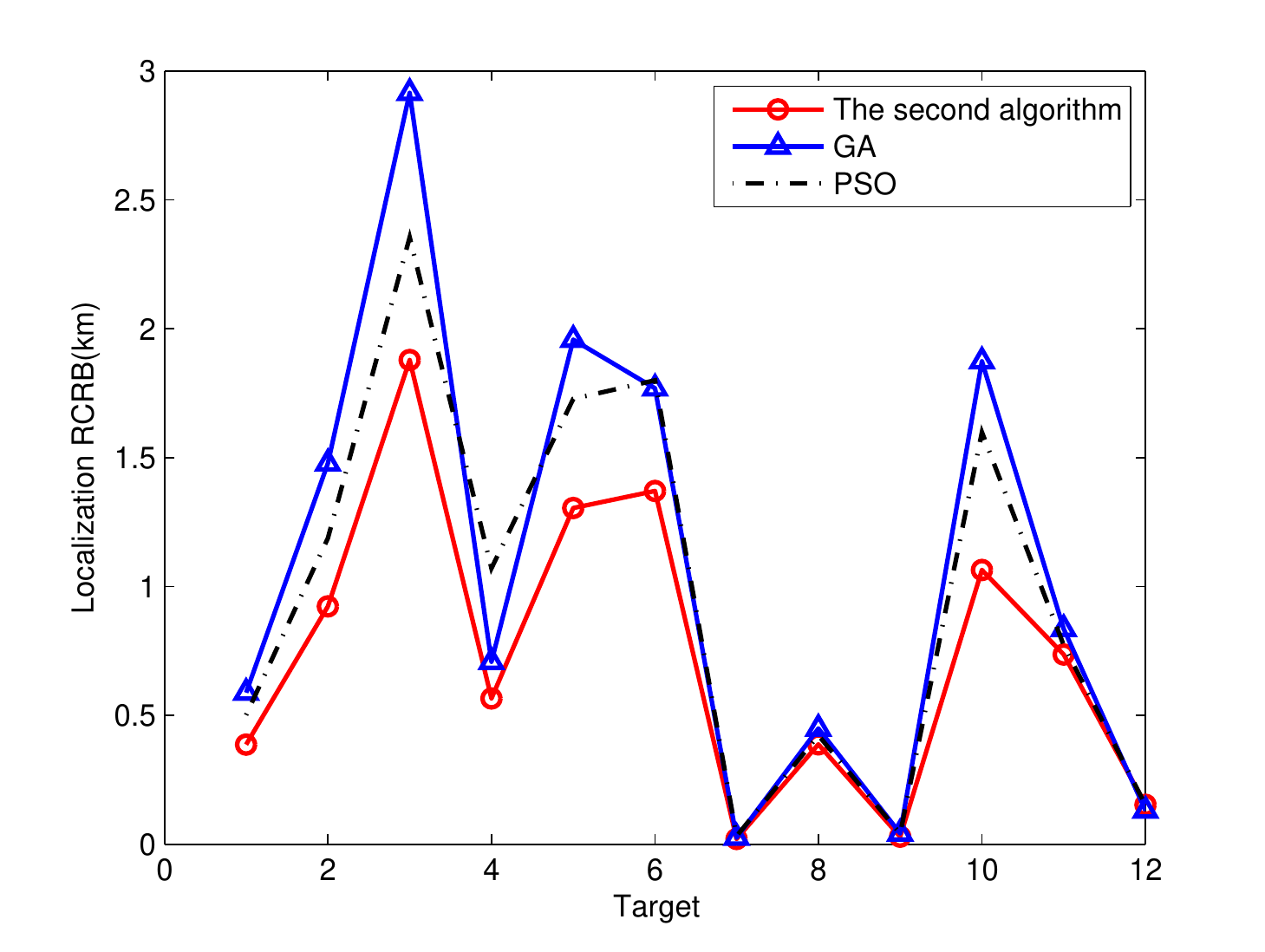}
		\captionof{figure}{  Localization performance comparison of every target}
		\label{fig12}
	\end{center}

	\section{Conclusion}\label{sec5}
	We proposed a joint method of MM and PGD to perform array resource allocation in an RCI network. The RCI model, including localization and communication function, is used. To further simplify the optimization problem, the MM method is employed to design the surrogate functions for array allocation. With and without communication capacity constraint, the PGD method is used to solve the two optimization problems. Our simulation results show that the two proposed algorithms have improved performance compared to the classic optimization methods.

	\bibliography{reference}
	
\end{document}